\tikzset{>={Latex[width=1.5mm,length=2.3mm]}}
\tikzstyle{line}=[draw]
\tikzset{%
    node/.style={circle, draw=black, fill=black, minimum size=1mm, inner sep=2pt},
    node2/.style={circle, draw=black, fill=black, minimum size=0.5mm, inner sep=1pt}
}
\newcommand{\Z}{\ensuremath{\mathbb{Z}}}
\newcommand{\N}{\mathbb N}
\title{About the Infinite Windy Firebreak Location problem}
\date{}
\author[1]{Marc Demange}
\author[2]{Alessia Di Fonso}
\author[3]{Gabriele Di Stefano}
\author[4]{Pierpaolo Vittorini}
\affil[1]{School of Science, RMIT University, Melbourne, Victoria, Australia \textbf{marc.demange@rmit.edu.au}}
\affil[2,3]{Department of Information Engineering, Computer Science and Mathematics, University of L'Aquila, Italy \textbf{alessia.difonso@graduate.univaq.it}, \textbf{gabriele.distefano@univaq.it} }
\affil[4]{Department of Life, Health and Environmental Sciences, University of L'Aquila, Italy \par
\textbf{pierpaolo.vittorini@univaq.it}}
\newtheorem{thm}{Theorem}[section]
\newtheorem{claim}[thm]{Claim}
\newtheorem{coro}[thm]{Corollary}
\newtheorem{lem}[thm]{Lemma}
\newtheorem{prop}[thm]{Proposition}
\newtheorem{rmk}[thm]{Remark}
\newcommand{\MFS}{{\sc Firebreak Location}\xspace}
\newcommand{\IMFS}{{\sc Infinite Windy Firebreak Location}\xspace}
\newcommand{\MSFS}{{\sc Windy Firebreak Location}\xspace}
\newcommand{\MC}{{\sc Min Cut}\xspace}
\newcommand\gab[1]{\textcolor{black}{#1}}
\newcommand\ales[1]{\textcolor{black}{#1}}
\newcommand\mar[1]{\textcolor{black}{#1}}
\begin{document}

\maketitle
\thispagestyle{empty}
\begin{abstract}
\noindent
The severity of wildfires can be mitigated adopting preventive measures like the construction of firebreaks that are strips of land from which the vegetation is completely removed. In this paper, we model the problem of wildfire containment as an optimization problem on infinite graphs called \IMFS. 
A land of unknown extension is modeled as an infinite undirected graph in which the vertices correspond to areas subject to fire and edges represent the propagation of fire from an area to another. 
The construction of a firebreak on the territory is modeled as the removal of edges in both directions between two vertices. The number of firebreaks that can be installed depends on budget constraints. We assume that fire ignites in a subset of vertices and propagates to the neighbours. The goal is to select a subset of edges to remove in order to contain the fire and avoid burning an infinite part of the graph.
We prove that \IMFS is coNP-complete in restricted cases and we address some polynomial cases.
We show that \IMFS polynomially reduces to \MC for certain classes of graphs like infinite grid graphs and polyomino-grids. 
\end{abstract}

\textbf{Keywords: }Firebreak location, infinite graphs, grid graphs, wildfire emergency management, risk management. 

\section{Introduction}

\subsection{Motivation}
Wildfires have a devastating impact on the environment and human activity. During a fire people's and animals' lives are at danger, not to mention the loss of several acres of vegetation that can take decades to recover. Furthermore, because of climate change, wildfires are becoming more frequent and catastrophic, even in areas that were previously deemed low risk.  Wildfires require an effective management solution under budget and resource limitations. Mathematical models can be very helpful in supporting efficient and accurate decision making. Preventive measures can reduce the severity of wildfires. One of them is the installation of firebreaks. Firebreaks are strips of land that have had all vegetation removed to prevent fires from spreading beyond them. However they have high installation and maintenance costs because the vegetation must be kept very low. Moreover they impact on the landscape with ecological costs. For all these reasons, one needs to plan carefully where to install them. 
This article deals with the issue of forest fire prevention from a theoretical point of view. 
\mar{We model} a land of unknown extension as an infinite undirected graph in which the vertices correspond to areas subject to fire, with a certain probability \mar{of ignition} and edges represent the probability \mar{the} fire spreads from an area to another.
\mar{In this model,} the construction of a firebreak on the territory corresponds to the removal of 
\mar{all} edges between two vertices. The number of firebreaks that can be installed is limited by a budget. 
We introduced the \MFS problem in \cite{Demange2022AProblemInPress}. Given a \mar{(finite)} graph and a subset of vertices subject to fire, the goal is to select a subset of edges (cut system) to remove to reduce risk while staying within budget constraints.
We proved the hardness of the problem even when the graph is planar, bipartite, with a maximum degree of four and the propagation probabilities (associated to edges) are \mar{all} equal to one.
In \cite{safetyscience2021}, we presented heuristic approaches applied to variations of the original problem along with experimental results.

\subsection{Our contribution}

In this paper we introduce the \IMFS problem. Given an infinite graph, we assume that a fire ignites in a subset of vertices and propagates to the neighbors. The goal is to select a subset of edges to remove in order to contain the fire and avoid burning more than a finite part of the graph.
We prove that \IMFS is coNP-complete in restricted cases and we address some polynomial cases.
We show that \IMFS polynomially reduces to \MC for certain classes of graphs like infinite grid graphs and polyomino-grids.

The paper is organized as follows.
Section \ref{sec:defmodel} introduces the main notations and the \IMFS\ problem. Then, Section \ref{sec:complexity} deals with the complexity of the problem, whereas Section \ref{sec:polynomially} outlines some polynomial cases. Finally, Section \ref{sec:conclusions} ends the paper with brief conclusions.

\section{Definitions and Model}
\label{sec:defmodel}

\subsection{Main notations}
\label{ssec:notations}


Unless otherwise stated, all graphs are infinite and undirected. 
Note that in an infinite graph, paths are finite and {\em{rays}} are the infinite counterpart. So, an infinite graphs  is connected if every two vertices are linked by a (finite) path.

Let $G=(V,E)$ be an (infinite) undirected graph; an edge $e \in E$ between vertices $x,y \in V$ will be denoted by $e=\{x,y\}$.  
For any edge set $H\subset E$, we denote by $G_H=G\setminus H$ the partial graph $(V, E\setminus H)$ obtained from $G$ by removing edges in $H$. Given a set $V'\subset V$, $G[V']$ denotes the subgraph induced by $V'$ and any graph $G''=(V'',E'')$, $V''\subset V, E''\subset E$ will be called partial subgraph of $G$. 

All graph-theoretical terms not defined here can be found in~\cite{DIESTEL}. For complexity concepts we refer the reader to~\cite{GAREY}.

\subsection{The \MFS Problem}
\label{ssec:problem}

A model for the \MFS\ problem, based on finite mixed graphs, was introduced in \cite{Demange2022AProblemInPress}. 
We adapt it to the infinite undirected case.

In the instance graph, vertices are subject to burn and edges represent potential fire spread from one vertex to an adjacent one. Fire ignition may occur on 
a finite number of vertices. The objective will be to select a finite set of 
edges, called {\em cut system}, to be blocked (removed) within a budget constraint in order to reduce the induced risk, as described below. Typically, blocking and edge will correspond to installing a firebreak corridor between two areas of the land.

For the  \MFS problem on finite graphs, every vertex $v\in V$ is assigned a positive integral value $\varphi(v)$ and a probability of ignition $\pi_i(v)$.
For each vertex $v$, a probability of burn $\bar{\pi}(v)$ can be calculated given both the probabilities of  ignition and of spread.
 Given a cut system $H$ and assuming that a vertex $v$ of $G_H$ has the probability $\bar{\pi}(v)$ to burn, then the {\em risk} for $G_H$ is given by
\begin{equation}
    \sum_{v\in G_H} \bar{\pi}(v)\varphi(v)
\end{equation}
that can be seen as the expected value of burnt vertices without any further intervention of firefighters.
 Details on how the risk for $G_H$ can be computed in finite graphs are available in \cite{Demange2022AProblemInPress}. These details are not necessary for this paper as explained below. 
 
 In a  finite mixed graph, 
 the particular case, called {\em windy}, 
 corresponds to when all probabilities of spread are equal to~1. It allows to have a risk computed in polynomial time. In a practical example, it consists in considering that, without any intervention of firefighters, the fire will  eventually spread. Considering the undirected case corresponds to assuming that all directions of wind are possible. Since the model is meant to be used for fire prevention on a long period of time and not  for the response phase, this assumption makes perfect sense.
 Then, in the finite case, the \MSFS problem is defined as 
 selecting a {\em cut system} $H \subset E$ that minimizes the risk for $G_H$ under a budget constraint. In this work, we will consider that all edge costs are equal and thus, the constraint will be $|H|\leq B$. 
 
 In  an infinite graph with probabilities of spread all equal to~1 (windy case), we consider only finite cut systems and a finite number of vertices with a positive probability of ignition. 
 Then, all definitions can be easily extended and two cases are to be considered. 
 
 First, if  all vertices with a positive probability of spread are in finite connected components of $G_H$, then the risk is finite and immediately computable as the risk associated with the finite graphs consisting in the union of connected components that include at least one vertex of positive probability of ignition. The rest of the graph does not induce any risk. In the second case where there is an infinite connected component of $G_H$ with a vertex of positive probability of ignition, the risk becomes infinite as vertex values have been assumed positive integers. Then, a natural question is whether there is a cut system satisfying a budget constraint and guaranteeing a finite risk. This is the problem we address here.  Since this problem does not change with binary probabilities of ignition, we make such assumption. 
 So, the problem is formally defined as follows:

\begin{quote}
\IMFS\\
{\underline{Instance}}: an undirected 
infinite graph $G=(V,E)$ defined by a finite string of length at most $n$;  a finite subset $\widetilde{V}$, $|\widetilde{V}|\leq n$, of initially burning vertices. A total budget $B\leq n$.\\
{\underline{Question}}: is there a {\em cut system} $H\subset E$ such that $|H|\leq B$ and such that the vertices in $\widetilde{V}$ are in finite connected components of $G_H$ (the fire can be contained)?\\
\mbox{}\\
We will denote such an instance 
$(G,\widetilde{V}, B)$ and call $n$ the {\em size} of~$G$.
\end{quote}


\subsection{A remark about a notion of complexity on infinite instances}
\label{ssec:compinf}

To our knowledge, there have been very few attempts to extend the definition of complexity for the case of combinatorial problems defined on infinite graphs. Among these attempts,  \cite{comp_infinite} considers instances that are defined with incomplete information. Here, we adopt a completely different perspective by considering finitely represented infinite graphs. This means that we assume a finite encoding of each instance. Then, through a given encoding scheme,  the problem becomes a 
finite combinatorial problem in the common sense. The size of an instance is then the length of the finite string representing it or any polynomial function of this length. This gives us the possibility to refer to the classical complexity theory in order to analyse the intractability of problems on finitely represented infinite graphs. In this process however, we need to be careful that different encoding schemes lead to different problems with, possibly, different complexity \cite{GalperinW83}, as the example in the next section will show. 


\section{About the complexity of \IMFS}
\label{sec:complexity}

Here, we give some evidence of the  hardness of \IMFS, even on a very simple class of   finitely represented infinite graphs. The graphs we consider are constituted with a finite star with non-crossing infinite 
\mar{rays} (called {\em infinite tail}) attached to some leaves of the star.   For such a graph, we denote $o$ the center of the star. Only the center $o$ has a probability of 
\mar{ignition} equal to~1 and all other vertices have a probability of 
\mar{ignition} equal to~0. 

A trivial finite representation is by listing the neighbors of $o$ and indicating those that have an infinite tail. 
So, a natural representation is a boolean vector of dimension $n$, where $n$ is the number of neighbors of the center $o$ and 1 entries correspond to infinite tails.  With this representation, a reasonable size of such an instance is the degree of $o$. Within this encoding scheme, the problem is trivially polynomially solvable: the size of a minimum cut is the number of neighbors of $o$ with an infinite tail. We can also represent such an  instance as two numbers, the number of neighbor's of $o$ with an infinite tail and the number of  neighbor's of $o$ without infinite tail. The related size is then the number of bits required to represent these numbers; it is a logarithm  of the previous size and the problem remains clearly polynomial within this representation. 

We now propose a subclass of these instances with an alternative representation. Assume that we have a finite set $X$ of size $n$ and a boolean function $f:2^X\rightarrow \{0,1\}$ computable in polynomial time with respect to $n$, where $2^X$ is the set of subsets of $X$. The neighborhood of $o$ is $2^X$ and only those neighbors $x$ such that $f(x)=1$ have an infinite tail. Since we can decide in polynomial time whether a neighbor of $o$ has an infinite tail, it is reasonable to define $n$ as the size of the graph. The center $o$ is still the only vertex on fire at the start ($\widetilde V=\{o\}$) and $B$ polynomially bounded in the size~$n$. We denote by $\cal{S}$ the set of these instances with this representation. 

\begin{prop}\label{pro:coNP}
\IMFS restricted to instances in $\cal{S}$ 
is coNP-complete.
\end{prop}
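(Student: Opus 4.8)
The plan is to establish the claim in two parts: membership in coNP and coNP-hardness. For both, the key observation is that in an instance from $\cal{S}$, the graph is a star centred at $o$ with neighborhood $2^X$, and a neighbor $x\subseteq X$ carries an infinite tail exactly when $f(x)=1$. Since the only burning vertex is $o$ and a cut system must place $o$ in a finite component, the fire is contained if and only if every edge from $o$ to a neighbor with an infinite tail is removed. Hence the \emph{minimum} cut system has size equal to the number of neighbors $x$ with $f(x)=1$, and the instance is a YES-instance precisely when $|\{x\in 2^X : f(x)=1\}|\le B$.

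\medskip

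For coNP membership, I would argue that the complement (the NO-instances) is in NP: a NO-instance is one where $|\{x : f(x)=1\}| > B$, and a polynomial certificate is a list of $B+1$ distinct subsets $x_0,\dots,x_B\subseteq X$ together with the verification that $f(x_i)=1$ for each. Since $f$ is polynomial-time computable and $B\le n$ is polynomially bounded, this certificate has polynomial size and is checkable in polynomial time. This shows the complement is in NP, so \IMFS\ restricted to $\cal{S}$ is in coNP.

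\medskip

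For coNP-hardness, the natural strategy is to reduce from a known coNP-complete problem that asks whether a polynomially computable boolean predicate is satisfied by at most a bounded number of inputs — equivalently, a counting/threshold version of tautology or unsatisfiability. The cleanest route is to reduce from the complement of a satisfiability-type problem: given a boolean formula (or polynomial-time predicate) $g$ on $X$, I would set $f = g$ and choose the budget $B$ so that the containment question becomes equivalent to ``$g$ is unsatisfiable'' or ``$g$ has at most $B$ satisfying assignments.'' Concretely, taking $B=0$ makes the instance a YES-instance if and only if $f$ is never $1$, i.e.\ the predicate is identically false, which is exactly a \textsc{Tautology}/co-\textsc{Sat} style question and is coNP-complete. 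One must check that the encoding of the star-with-tails instance from $\cal{S}$ can be produced in polynomial time from the formula, which is immediate since the instance is specified entirely by $(X,f,B)$ and $f$ is just the given formula.

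\medskip

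The main obstacle I anticipate is purely in the hardness direction: ensuring that the reduction respects the intended encoding scheme of $\cal{S}$ and that the chosen source problem is genuinely coNP-complete under this finite-representation convention. In particular, the remark in Section~\ref{ssec:compinf} warns that the complexity is sensitive to the encoding, so I would be careful that the reduction maps a standard boolean formula to an instance whose size $n=|X|$ is polynomially related to the formula length, and that the polynomial-time computability of $f$ is inherited directly from evaluating the formula. Once the correspondence ``YES-instance $\iff$ $f$ has at most $B$ ones'' is fixed, both the membership certificate and the hardness reduction follow routinely; the conceptual content is entirely in recognizing that the containment problem on this star class is exactly a threshold-counting problem over a polynomially evaluable predicate.
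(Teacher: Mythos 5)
Your proposal is correct and follows essentially the same route as the paper: coNP membership via a certificate of $B+1$ distinct neighbors $x_i$ with $f(x_i)=1$, and hardness by reducing from the complement of SAT with $f$ given by formula evaluation (the paper uses $B=1$ plus an extra ray attached to $o$ rather than your $B=0$, an inessential variation, and handles your encoding concern by padding the variable set so the number of clauses is at most $|X|$). No gaps worth noting.
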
 

\begin{proof}
Note first that this particular case of \IMFS is in coNP. Consider indeed 
an instance 
$(G,\widetilde{V}=\{o\}, B)$ in  $\cal{S}$: $G$ is a star with center $o$  defined from a set $X$ of size $n$ and a boolean function~$f$.  
It is a no-instance if and only if we have $B+1$ different neighbors of $o$ with an infinite tail. Given $B+1$ neighbors of $o$, $x_0, \ldots, x_{B}$ we can check in polynomial time whether they are all different and whether $\forall i\in\{0, \ldots, B\}, f(x_i)=1$.

We consider an instance $I$ of SAT, known to be NP-complete, with a set $X$ of $n$ variables and $m$ clauses. Without loss of generality we can assume $m\leq n$: 
we indeed just can add to $X$  $m$ artificial variables and one clause including all of them. We associate to it the graph $G$  obtained by linking the centre $o$ with all truth assignments (in one-to-one correspondence with $2^X$). For any truth assignment $x$, $f(x)=1$ if and only if all clauses are satisfied; $f$ is computable in polynomial time.
We also add to $o$ an infinite 
\mar{ray} that does not cross any tail. We then consider the instance 
$I'=(G,\widetilde V=\{o\}, B=1)$
of \IMFS. $I'$ can be defined in polynomial time with respect to $n$ and is an instance in ${\cal S}$. It is a no-instance if and only if $I$ is a yes-instance. This concludes the proof.
\end{proof}

\mar{Note that, in the class of instances $\cal{S}$, only one vertex \-- the center \-- has a non-zero probability of ignition. If we do not require this property, then exactly the  same proof can be applied on graphs consisting of   $2^{|X|}$ disjoints components, each being either a single vertex or a ray.}

\section{Some polynomially  solvable  cases}
\label{sec:polynomially}

\mar{The hardness results in the previous section motivate the question of identifying some polynomial cases for \MFS. Since \IMFS and \MSFS revealed to be hard in restricted cases and since the complexity of 
\MSFS with binary ignition probabilities is still open, it seemed to us relevant to start with this case. We identified two polynomial cases and possibly the methods could be extended to other cases. For some graph classes including grids, the infinite version of \MSFS turns to be polynomial since it reduces to  \MC. Roughly speaking, it means that deciding whether we can contain the fire (i.e., deciding whether at least a finite risk can be  guaranteed) instead of minimizing the risk is polynomial. This case is also interesting since it is not impacted by restrictions on the vertex values, edge costs and ignition probabilities. So, it is enough to consider the case where all these parameters are binary.}

\subsection{\mar{\IMFS} in Infinite Grids}

In this subsection, we identify a  class of \IMFS instances that are polynomially solvable. Complexity considerations for \IMFS will refer to  $n$ assumed to be at least  $|\widetilde{V}|+B$, as the size of the instance, where $\widetilde{V}$ is the set of vertices with a positive  probability of ignition. Note that the problem is not changed if we assume all probabilities of ignition equal to~1 in $\widetilde{V}$ (and 0 elsewhere).


We outline two properties of infinite graphs that are in particular satisfied by various versions of infinite grids.  In an infinite connected graph $G=(V,E)$ and any subgraph $G[V']$ of $G$, we call {\em escaping edges} from $G[V']$ any edge between $V'$ and an infinite connected component of $G[V\setminus V']$. \mar{We call {\em ball centered on vertex $x$ and of radius $K\in\mathbb{N}$} in $G$ the set of vertices $\{y\in V, d(x,y)\leq K\}$}. 
\begin{quote}
{\underline{Polynomial growth property:}}\\
    The first property, called {\em polynomial growth property}  states that the cardinality of balls for the minimum path distance (all edge lengths are~1) is polynomial with respect to the radius.
    It expresses that the graph has a ``polynomial expansion'' around any vertex. This property was first introduced in \cite{Seifter1991PropertiesGrowth}.
\end{quote} 

\begin{quote}
{\underline{Expansion property:}}\\
On the contrary, the second property, called {\em expansion property}, expresses that the graph always expands around vertices: there is an integral polynomial function $L$ such that, for any value $B$, any finite subgraph with more than $L(B)$ vertices has at least $B+1$ escaping edges. 
\end{quote}

We are interested in graphs satisfying both properties. Then, the same polynomial function can be used to describe the properties, as outlined in the following remark:

\begin{rmk}\label{rem:sameL}
\mar{If an infinite graph $G$ satisfies the polynomial growth and the expansion properties, then there is a polynomial function $L$ such that:
\begin{enumerate}
    \item[(i)] $\forall x\in V, \forall K\in \mathbb{N}, |\{z, d(x,z)\leq K\}|\leq L(K)$;
    \item[(ii)] Any finite connected subgraph of size more than $L(B)$ has at least $B+1$ escaping edges.
\end{enumerate}}
\end{rmk}
\begin{proof}
\mar{Indeed,  both properties are still valid if we replace the polynomial function with a larger one. We conclude by noticing that the maximum between two polynomial functions is a polynomial function. }
\end{proof}

\subsubsection{About graphs satisfying the polynomial growth and expansion properties}

\mar{As outlined by the following lemmas, these two properties are satisfied in many classes of infinite graphs that are natural in our application context.} 

\mar{Infinite grids correspond to the simplest illustration.}  
Let 
a {\em double ray} be the graph $P=(\Z,E)$ with $E=\left\{\{i,i+1\} :~i\in \Z\right\}$. \mar{The} infinite grid
is \mar{then} defined as
the Cartesian product $P\times P$. It is an non-directed graph.
\begin{lem}\label{lem:grid-propertues}
\mar{The infinite grid satisfies the polynomial growth property and the expansion property.}
\end{lem}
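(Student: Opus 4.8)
The plan is to verify the two required properties separately, since they are logically independent; Remark~\ref{rem:sameL} then merges the two polynomials into a single one. The polynomial growth property is immediate, so the substance of the lemma lies in the expansion property, which I would establish by a bounding-box (projection) argument rather than by a generic isoperimetric inequality.

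For the growth property, I would simply observe that in $P\times P$ the graph distance equals the $\ell_1$ (Manhattan) distance, so the ball of radius $K$ centred at any vertex $x$ is the ``diamond'' $\{y : \|y-x\|_1 \le K\}$. Counting by shells gives $|\{y: d(x,y)\le K\}| = 1 + \sum_{k=1}^{K} 4k = 2K^2+2K+1$, independently of the centre $x$. Thus property~(i) holds with, for instance, the quadratic polynomial $K \mapsto 2K^2+2K+1$.

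For the expansion property, let $V'$ be a finite set of vertices and let $w$ (resp.\ $h$) be the number of columns (resp.\ rows) of the grid that contain at least one vertex of $V'$. Since the map sending each vertex to its (column, row) pair is injective, $|V'|\le wh$, hence $\max(w,h)\ge \sqrt{|V'|}$. Assume without loss of generality that $h=\max(w,h)$. In each of the $h$ occupied rows, consider the leftmost vertex $(i,j)\in V'$ of that row; its left neighbour $(i-1,j)$ lies on the horizontal ray $\{(i-1,j),(i-2,j),\dots\}$, which is entirely contained in $V\setminus V'$ and is therefore part of an infinite connected component of $G[V\setminus V']$. Consequently the edge $\{(i,j),(i-1,j)\}$ is an escaping edge, and the edges obtained from distinct rows are pairwise distinct. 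This produces at least $h\ge\sqrt{|V'|}$ escaping edges, so choosing $L(B)=(B+1)^2$ guarantees that any finite subgraph with more than $L(B)$ vertices has at least $B+1$ escaping edges.

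The step I expect to be the main obstacle -- and the reason I would avoid the plain ``edge boundary'' isoperimetric inequality for $\mathbb{Z}^2$ -- is precisely the requirement that the counted edges be \emph{escaping}, i.e.\ that they reach an \emph{infinite} component of the complement rather than a finite hole enclosed by $V'$. The extremal-vertex device circumvents this: each leftmost vertex of a row is joined to a monotone coordinate ray that cannot be trapped by the finite set $V'$, so it is automatically connected to infinity. This also makes the argument insensitive to whether $V'$ is connected, matching the wording of Remark~\ref{rem:sameL}(ii).
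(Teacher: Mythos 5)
Your proof is correct, but it takes a genuinely different route for the expansion property. The paper deduces it from the Harary--Harborth isoperimetric theorem for polyominoes (minimum perimeter $2\lceil 2\sqrt{p}\rceil$ for $p$ tiles), after setting up a correspondence between finite subgraphs of the grid and polyominoes under which escaping edges become the external perimeter; this yields the sharper bound $L(B)=\left\lceil\frac{B(B+2)}{16}\right\rceil$. You instead give a self-contained projection argument: $|V'|\le wh$ forces $\max(w,h)\ge\sqrt{|V'|}$, and the leftmost vertex of each occupied row contributes a distinct escaping edge because the leftward coordinate ray from its neighbour avoids $V'$ entirely and hence lies in an infinite component of the complement. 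Your constant is worse ($L(B)=(B+1)^2$), which is immaterial for polynomiality, but your argument has two advantages: it needs no external citation, and it handles head-on the distinction between boundary edges and \emph{escaping} edges (edges into holes of $V'$ do not count), a point the paper's reduction to polyomino perimeter passes over rather quickly. It also works verbatim for disconnected $V'$, matching the property as originally stated. The growth-property computation is identical to the paper's.
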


\begin{proof}
It satisfies the polynomial growth property: for any vertex $x$ of the infinite grid and any integer $K$,we have: $|\{z, d(x,z)=K\}|=4K$ and consequently, each ball of radius $K$ has cardinality $1+2K(K+1)$.

It is also easy to verify that the infinite grid satisfies the expansion property. In \cite{Harary1976ExtremalAnimals} it is proved that the minimum possible perimeter of a polyomino with $p$ tiles is $2\left\lceil 2\sqrt{p}\right\rceil$. 
\mar{The adjacency graph (or {\em dual graph}) of a polyomino, where tiles are associated with vertices and  tiles adjacency corresponds to vertex adjacency, is a finite  subgraph of the infinite grid. Conversely, every finite subgraph of the grid is the adjacency graph of a polyomino. Several polyominoes may have isomorphic adjacency graphs. However, we can choose the embedding of the adjacency graph in the grid that preserves the orientation: two adjacent tiles one of the right of (resp. above) the other correspond to two vertices in the grid with the same relative position.   Then, the correspondence is one-to-one up to a translation and the external perimeter of the polyomino corresponds to the number of escaping edges of the corresponding subgraph of the infinite grid. So, the result of \cite{Harary1976ExtremalAnimals} is equivalent to say that a finite subgraph of the infinite grid with $p$ vertices has at least $2\left\lceil 2\sqrt{p}\right\rceil$ escaping edges. Choosing  $p=\frac{(B+1)^2}{16}$ ensures at least $B+1$ escaping edges. So, in the infinite grid we can choose for instance  $L(B)=\left\lceil\frac{B(B+2)}{16}\right\rceil$. This concludes the proof}.
\end{proof}

\mar{It is straightforward to verify that, if an infinite graph satisfies the polynomial growth property, then any partial subgraph  also does. Indeed,  balls of the partial subgraph are always contained in balls of the original graph.}

\mar{More work is required to analyze the expansion property in a subgraph. When considering infinite subgraphs of an infinite graph represented by a finite string, we will only consider
removing a finite number of vertices to ensure that the new graph can also be represented by a finite string. Then, it will be natural to consider that the description of the removed vertices is part of the description of the subgraph and consequently, the size of the subgraph is at least the number of removed vertices. This leads to the surprising fact that the size does not decrease but  may increase when taking a subgraph. Since \IMFS is defined in infinite graphs, we will not consider finite subgraphs of an instance as a new instance. With these definitions, the expansion property is also transferred to subgraphs.}

\begin{lem}\label{lem:expansion}
If an infinite graph of finite maximum degree $\Delta$ satisfies the expansion property for a polynomial function $L$, then any induced subgraph obtained by removing a finite set $V'$ of vertices also satisfies the expansion property for the polynomial function $L': B\mapsto L(B+\Delta|V'|)$.
\end{lem}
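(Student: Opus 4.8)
The plan is to reduce the expansion property of the subgraph $G_0 := G[V\setminus V']$ to that of $G$ by a suitable \emph{augmentation} of the vertex set under scrutiny. Fix a value $B$ and an arbitrary finite set $W\subseteq V\setminus V'$ with $|W|>L'(B)=L(B+\Delta|V'|)$; the goal is to produce at least $B+1$ escaping edges from $G_0[W]$ inside $G_0$. Note that escaping edges are defined purely in terms of the infinite connected components of the complement, so this makes sense even if $G_0$ itself is disconnected.

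The key move is \emph{not} to apply the expansion property of $G$ to $W$ directly, but to the augmented finite set $W':=W\cup V'$ (a disjoint union, since $W\subseteq V\setminus V'$). The point is that the complements then match exactly: $V\setminus W'=(V\setminus V')\setminus W$, so the infinite connected components of $G[V\setminus W']$ are precisely the infinite connected components of $G_0$ restricted to $(V\setminus V')\setminus W$. Consequently, an escaping edge of $G[W']$ whose $W'$-endpoint lies in $W$ is literally an escaping edge of $G_0[W]$, and conversely.

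The counting is then immediate. Since $|W'|\ge |W|>L(B+\Delta|V'|)$, the expansion property of $G$ applied with parameter $B+\Delta|V'|$ yields at least $B+\Delta|V'|+1$ escaping edges from $G[W']$. Each such edge has its $W'$-endpoint either in $W$ or in $V'$; edges of the first kind are escaping edges of $G_0[W]$, while edges of the second kind are incident to $V'$ and hence number at most $\Delta|V'|$ because the maximum degree is $\Delta$. Subtracting, at least $(B+\Delta|V'|+1)-\Delta|V'|=B+1$ escaping edges survive with endpoint in $W$, giving the required $B+1$ escaping edges of $G_0[W]$; as $W$ was arbitrary, $G_0$ satisfies the expansion property for $L'$.

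The main obstacle to watch for is the naive route of applying the expansion property to $W$ itself. Then the relevant complement $V\setminus W$ differs from $(V\setminus V')\setminus W$, and removing $V'$ may split an infinite component of $G[V\setminus W]$ into one infinite piece together with several finite pieces, \emph{trapping} some escaping edges in finite components where they no longer count; bounding those trapped edges directly is awkward. The augmentation $W'=W\cup V'$ sidesteps this entirely, so that the only loss is the at most $\Delta|V'|$ edges incident to the removed vertices, which is exactly the slack absorbed by the shift from $B$ to $B+\Delta|V'|$ in the argument of $L$.
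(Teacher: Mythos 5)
Your proof is correct, but it takes a genuinely different route from the paper's. The paper argues by contraposition on the \emph{same} vertex set: it takes a finite subgraph $G''$ of $G[V\setminus V']$ with at most $B$ escaping edges in $G[V\setminus V']$, asserts that $G''$ then has at most $B+\Delta|V'|$ escaping edges in $G$ (``each vertex of $V'$ cannot induce more than $\Delta$ new escaping edges''), and concludes from the expansion property of $G$ that $G''$ has at most $L(B+\Delta|V'|)$ vertices. You instead argue directly and, crucially, apply the expansion property of $G$ to the augmented set $W'=W\cup V'$ rather than to $W$; since $V\setminus W'=(V\setminus V')\setminus W$, the infinite components underlying the two notions of escaping edge coincide exactly, and the only loss is the at most $\Delta|V'|$ escaping edges of $G[W']$ whose $W'$-endpoint lies in $V'$. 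Your augmentation buys real rigour here: in the paper's comparison, an edge from $G''$ that reaches an infinite component of $G[V\setminus V'']$ may reach only a finite fragment of $G[V\setminus(V'\cup V'')]$ once $V'$ is deleted, and such edges need not be incident to $V'$ (think of $V''$ a square ring in the grid with a one-vertex gap closed off by a vertex of $V'$: every edge to the enclosed region is escaping in $G$ but not in $G[V\setminus V']$), so the ``at most $\Delta$ new escaping edges per vertex of $V'$'' count is not immediate as written --- this is exactly the pitfall you identify in your closing paragraph, and your complement-matching makes the bookkeeping exact. The one caveat is that $W\cup V'$ need not be connected even when $W$ is, so you are implicitly invoking the expansion property for arbitrary finite subgraphs, as in the paper's original definition of the property, rather than the connected variant stated in Remark~\ref{rem:sameL}; the paper's own proof makes the same tacit choice, so this is consistent with the text.
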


\begin{proof}
Consider an   infinite graph $G=(V,E)$ of finite maximum degree $\Delta$ satisfying the polynomial expansion property for the polynomial function $L$  and let 
\mar{$V'$} be a finite subset of $V$. We prove that $G[V\setminus V']$ also satisfies the polynomial expansion property. Consider a finite  subgraph $G''=G[(V\setminus V')\cap V'']$ of $G[V\setminus V']$ 
with not more  than $B$ escaping edges in $G[V\setminus V']$. Then, $G''$ has at most $B+\Delta |V'|$ escaping edges in $G$ since each vertex of $V'$ cannot induce more than $\Delta$ new escaping edges. As a consequence, $G''$ is of order
at most $L(B+\Delta|V'|)$. Since \mar{$|V'|$ and $\Delta$ are constant for a fixed subgraph, $L'$ is a polynomial function for the variable  $B$}.  This completes the proof.
\end{proof}


Finally, we outline that,  adding edges between vertices at bounded distance also preserves both properties. Adding  edges to an infinite graph corresponds to the union of two infinite graphs \mar{on the same vertices}. If both graphs are represented by finite strings, then so does the union and the size of the union can be set as the sum of sizes of the two infinite graphs.

\begin{lem}\label{lem:add-edges}
Let $G$ be an infinite graph of size $n$ that satisfies the polynomial growth and the expansion properties. Let $G'$ be obtained from $G$ by adding edges between vertices at distance at most $S$ for a constant $S$. Then, $G'$ satisfies the polynomial growth and the expansion properties.
\end{lem}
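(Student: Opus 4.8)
The plan is to treat the two properties separately, disposing of polynomial growth quickly and concentrating on the expansion property. Throughout I write $d_G$ and $d_{G'}$ for the distances, I use the common polynomial $L$ furnished by Remark \ref{rem:sameL} for $G$, and I note at the outset that $G$ and $G'$ share the same vertex set and that $E\subseteq E'$ (adjacent vertices are at distance $1\le S$).

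For the polynomial growth property I would first observe that every edge $\{a,b\}$ of $G'$ is either an edge of $G$ or a newly added edge, so in both cases $d_G(a,b)\le S$; summing along a shortest $G'$-path and using the triangle inequality in $G$ yields $d_G(x,y)\le S\,d_{G'}(x,y)$ for all $x,y$. Hence every ball of $G'$ satisfies $\{z:d_{G'}(x,z)\le K\}\subseteq\{z:d_G(x,z)\le SK\}$, so its cardinality is at most $L(SK)$, which is a polynomial in $K$ since $S$ is constant. This settles polynomial growth for $G'$. I would also record here that the growth property of $G$ forces a finite maximum degree $\Delta\le L(1)-1$ (take $K=1$), a fact needed below.

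The expansion property is the heart of the argument, and the main obstacle is precisely that a set connected in $G'$ need not be connected in $G$: one long edge can glue together pieces arbitrarily far apart in $G$, so $G$'s expansion property cannot be applied to $V'$ directly. I would argue by contrapositive. Fix a finite, $G'$-connected vertex set $V'$ with at most $B$ escaping edges in $G'$, and let $D$ be the set of endpoints outside $V'$ of these escaping edges, so $|D|\le B$. The key structural claim is that $D$ separates $V'$ from infinity in $G'$: any ray starting in $V'$ has a tail disjoint from the finite set $V'$, and the first vertex of that tail is adjacent to $V'$ and lies in an infinite component of $G'[V\setminus V']$, hence belongs to $D$. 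Consequently every component of $G'-D$ meeting $V'$ is finite; let $C$ be the finite union of these components, so $V'\subseteq C$, and, since $C$ is a union of whole components of $G'-D$, every $G'$-neighbour of $C$ outside $C$ lies in $D$.

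It then remains to transfer this enclosure to $G$ and to repair connectivity. Because $E\subseteq E'$, the outward $G$-neighbours of $C$ also lie in $D$, so every escaping edge of $C$ in $G$ ends in $D$. The set $C$ may be $G$-disconnected, so I would split it into its $G$-components $C_1,\dots,C_m$; these are finite connected subgraphs of $G$, no two joined by a $G$-edge, and each therefore has all its outward $G$-neighbours in $D$ as well. Since $G$ is connected, each $C_i$ sends at least one edge to $D$, and the total number of $C$–$D$ edges is at most $\Delta|D|\le\Delta B$; this bounds simultaneously $m\le\Delta B$ and the number of escaping edges of each $C_i$ by $\Delta B$. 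Applying the expansion property of $G$ to each $C_i$ with parameter $\Delta B$ gives $|C_i|\le L(\Delta B)$, whence $|V'|\le|C|=\sum_i|C_i|\le\Delta B\cdot L(\Delta B)$, a polynomial in $B$ that serves as $L'$. The device that overcomes the obstacle is thus the separator $D$: it replaces the single $G'$-connected set $V'$ by a small family of genuinely $G$-connected pieces whose boundary is confined to $D$, after which the bounded degree and expansion of $G$ control both the size of each piece and their number. I would close by noting that the argument uses only that $G$ is connected, of finite maximum degree, and expanding, so it is insensitive to the precise form of the added edges provided they join vertices at bounded $G$-distance.
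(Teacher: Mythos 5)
Your proof is correct, and your treatment of the polynomial growth property coincides with the paper's (both use $d_G(x,y)\le S\,d_{G'}(x,y)$ and the bound $L(SK)$). For the expansion property, however, you take a genuinely different and more elaborate route. The paper's argument is a two-line monotonicity observation: if a finite set $V'$ has more than $L(B)$ vertices it has more than $B$ escaping edges in $G$, and every such edge remains an escaping edge in $G'$, because adding edges can only enlarge the components of the complement; hence the \emph{same} polynomial $L$ works for $G'$. This tacitly applies the expansion property of $G$ to a set that is only $G'$-connected (hence possibly $G$-disconnected), which is legitimate under the boxed definition of the property (which does not mention connectivity) but not under the connected formulation of Remark~\ref{rem:sameL}(ii), which is what the grid verification in Lemma~\ref{lem:grid-propertues} actually establishes. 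Your separator argument --- taking $D$ to be the outer endpoints of the escaping edges, enclosing $V'$ in the finite union $C$ of the components of $G'-D$ that meet $V'$, and splitting $C$ into $G$-components whose boundaries are confined to $D$ --- closes exactly that gap, since it only ever invokes $G$'s expansion on genuinely $G$-connected pieces. The price is a worse polynomial ($\Delta B\,L(\Delta B)$ instead of $L(B)$) and the extra hypothesis that $G$ is connected (needed so that each $C_i$ sends an edge to $D$), which is not in the lemma's statement but holds in all of the paper's applications. In short, the paper's proof buys brevity by reading the expansion property for arbitrary finite subgraphs; yours buys robustness to the connected reading at the cost of a larger bound.
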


\begin{proof}
Using Remark~\ref{rem:sameL}, we suppose that $G$ satisfies both properties for the same polynomial function $L$. 

Two vertices at distance $K$ in $G'$ are at distance at most $S\times K$ in $G$. So, a ball of radius $K$ in $G'$ is of cardinality at most $L(S\times K)$,which is a polynomial in $K$.

\mar{For the expansion property, we consider, for some $B$,   a finite set of vertices, $V'$, with $|V'|>L(B)$. In $G$, there are more than $B$ escaping edges and thus, this is true as well in $G'$, which completes the proof. } 
\end{proof}

\mar{Using Lemmas~\ref{lem:grid-propertues} and~\ref{lem:add-edges}, we  deduce in particular that  infinite grids with all diagonals $\left((x,y), (x+1,y+1)\right), \left((x,y), (x+1,y-1)\right)$ or a finite number of them satisfy both properties and can be represented by a finite string. } 

\mar{We conclude this section with a generalization of infinite grids that satisfy both properties. Consider  any tiling of the two dimensional plan with polyominoes of size at most $S$ unit-squares, for a fixed constant $S$ and that can be represented by a finite string. Then, we call {\em Polyomino-grid} the adjacency graph of the different polyominoes in such a tiling. It is an infinite graph represented by a finite string and the length of this string is the size of this graph. Usual grids correspond to the case $S=1$. A wall is a case where $S=2$.}

\begin{prop}\label{prop:pgrids}
\mar{Polyomino-grids satisfy the polynomial growth property and the expansion property.}
\end{prop}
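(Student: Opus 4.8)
The plan is to show that a Polyomino-grid $G$ satisfies both properties by comparing it with the underlying infinite grid, for which both properties are already established in Lemma~\ref{lem:grid-propertues}. The key observation is that each polyomino tile consists of at most $S$ unit-squares, so there is a natural many-to-one surjection $\psi$ from the vertices of the infinite grid (the unit-squares of the tiling) to the vertices of $G$ (the polyominoes), where each polyomino has at most $S$ preimages. Two polyominoes are adjacent in $G$ precisely when some unit-square in one is grid-adjacent to some unit-square in the other. This map is the bridge that transfers both distance estimates and perimeter/escaping-edge estimates back and forth between the two graphs.

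For the polynomial growth property, I would first verify that $\psi$ is distance-non-increasing in a controlled sense: if two unit-squares $a,b$ lie at grid-distance $d$, then $\psi(a),\psi(b)$ lie at distance at most $d$ in $G$ (a grid path projects onto a walk in $G$). Conversely, a path of length $K$ in $G$ between two polyominoes lifts to a grid path of length at most $SK$ (or a similar linear bound), since traversing one edge in $G$ costs moving across at most $S$ unit-squares plus one adjacency step. Hence a ball of radius $K$ in $G$ is the image under $\psi$ of a region contained in a grid ball of radius $O(SK)$, and since $\psi$ collapses at most $S$ squares per tile, the cardinality of the $G$-ball is bounded by the grid-ball cardinality, which is polynomial in $K$ by Lemma~\ref{lem:grid-propertues}. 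Composing a polynomial with the linear map $K\mapsto SK$ keeps it polynomial, so growth is polynomial.

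For the expansion property, the cleanest route is to relate escaping edges of a finite subgraph of $G$ to escaping edges (equivalently, boundary/perimeter) of the corresponding finite region of unit-squares. Given a finite connected subgraph $H$ of $G$ with vertex set a set of polyominoes, let $W=\psi^{-1}(H)$ be the union of the corresponding unit-squares; then $|W|\le S\,|V(H)|$ but also $|W|\ge |V(H)|$. An escaping edge of $H$ in $G$ arises from a grid-adjacency between a square of $W$ and a square outside $W$ lying in an infinite component, and each escaping edge of $H$ corresponds to at least one such boundary grid-edge while each boundary grid-edge of $W$ maps to at most one escaping edge of $H$ (up to a bounded multiplicity $c$ coming from multiple adjacent square-pairs realising the same polyomino adjacency, which is at most a constant depending on $S$). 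Thus the number of escaping edges of $H$ in $G$ is within a constant factor of the grid-perimeter of $W$. Applying the grid expansion bound from Lemma~\ref{lem:grid-propertues} to $W$, a large $W$ forces many grid boundary edges, hence many escaping edges for $H$; translating $|V(H)|\ge |W|/S$ through the grid polynomial $L$ yields a polynomial bound for $G$.

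The main obstacle I expect is not the growth estimate but making the escaping-edge correspondence rigorous, specifically controlling the constant multiplicity between grid boundary edges and $G$-escaping edges and ensuring that infinite components are matched correctly under $\psi$. Because $\psi$ is many-to-one and polyomino shapes can be irregular, one must check that an infinite component of the complement in $G$ corresponds to an infinite region of unit-squares (which holds since each tile is finite, so collapsing finite tiles cannot turn an infinite region finite or vice versa), and that the bounded tile size $S$ uniformly caps how many square-pairs can collapse onto a single $G$-edge. Once these bounded-multiplicity facts are pinned down, the expansion property for $G$ follows by substituting the appropriately scaled argument of the grid's polynomial $L$, and Remark~\ref{rem:sameL} lets us take a single polynomial witnessing both properties.
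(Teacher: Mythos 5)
Your proof follows essentially the same route as the paper's: refine each polyomino into its at most $S$ unit squares, bound a ball of radius $K$ in the polyomino-grid by a grid ball of radius $S\times K$ for polynomial growth, and for expansion compare the escaping edges of a finite subgraph with the grid boundary edges of the corresponding union of unit squares, losing only factors depending on $S$. The multiplicity and infinite-component correspondences you flag as the delicate points are exactly the ones the paper also relies on (and treats rather more tersely), so your argument is correct and matches theirs.
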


\begin{proof}
\mar{Given a polyomino-grid $G$ and the related tiling of the plan, partitioning each polyomino associated with a vertex in at most $S$ unit-squares leads to the regular tiling with squares. Given two vertices $x$ and $y$, and two squares $s_x$ and $s_y$ in the polyomino associated with $x$ and $y$, respectively. Then, in the infinite grid, the vertices associated with $s_x$ and $s_y$ are at distance at most $S\times K$. As a consequence, the cardinality of a ball of radius $K$ in $G$ is at most the cardinality of a ball of radius $S\times K$ in the infinite grid. As a consequence, using Lemma~\ref{lem:grid-propertues}, $G$ satisfies the polynomial growth property.}

\mar{
Suppose now a finite connected subgraph $G'$ of $G$ with $p$ vertices. Partitioning as previously each polyomino into at most $S$ unit-square leads to a connected polyomino with at least $p$ and at most $p\times S$ squares, thus a connected subgraph $G''$ with at least $p$ and at most $p\times S$ vertices in the infinite grid. Using Lemma~\ref{lem:grid-propertues}, there is a polynomial function~$L$ such that, if, for $B\in\mathbb{N}$,  $p> S\times B $, then the number of escaping edges in $G''$ is greater than $S\times B$. Each escaping edge in $G'$ corresponds to at most $S$ escaping edges in \ales{$G''$} and consequently, the number of escaping edges in $G''$ is greater than $B$, which concludes the proof.}
\end{proof}

\mar{Lemma~\ref{lem:expansion} ensures that removing a finite number of vertices from a polyomino-grid does not affect the two properties. Lemma~\ref{lem:add-edges} ensures we can add edges between vertices at bounded distance. The resulting classes of graphs are relevant as a fire spread network in wild fire emergency context. Polyomino-grids appear naturally as adjacency graphs of areas of similar surface in a landscape, removing some vertices allows to represent zones where the fire will not spread (like lakes) and adding edges between vertices that are close allows to represent spread by ember in some areas.}

\mar{In the next section, we outline that \IMFS can be solved in polynomial time  in polyomino-grids. }

\subsubsection{A polynomial case for \IMFS}

We then denote ${\cal G}_G$ the class of \MSFS instances of the form $I=(G[V\setminus V'],B)$, where $G$ is a finitely represented connected infinite graph of finite degree $\Delta$ and where $\Delta$, $|V'|$, $|\widetilde{V}|$ and $B$ are bounded by the size of $I$.

Note that $G[V\setminus V']$ may have finite connected components. However, we do not change the nature of $I$ by adding to $V'$ all vertices of a finite connected component of $G[V\setminus V']$. We just need to remark that the sum of cardinalities of these finite connected components is polynomial and that these components can be computed in polynomial time with respect to $n$:

\begin{lem}\label{lem:finitecc}
Denote $C$ the set of vertices  of all the finite connected components of $G[V\setminus V']$; $C$ is finite \mar{of cardinality at most  $L\left(\Delta\times |V'| \right)$} and  can be listed in polynomial time with respect to  $n$.
\end{lem}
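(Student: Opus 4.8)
The plan is to exploit the structure of the finite connected components of $G[V\setminus V']$ together with the expansion property encoded by the polynomial $L$.

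First I would describe the components of $G[V\setminus V']$ structurally. Since $G$ is connected, every component of $G[V\setminus V']$ is separated from the rest of $G$ by $V'$ alone: any edge leaving a component $F$ and landing in $V\setminus V'$ outside $F$ would attach its endpoint to $F$, so all boundary edges of $F$ (edges of $G$ with exactly one endpoint in $F$) end in $V'$, and in particular $F$ is adjacent to $V'$. As the number of edges incident to $V'$ is at most $\Delta|V'|$ and distinct components use distinct such edges, $G[V\setminus V']$ has at most $\Delta|V'|$ components in total. Their union $V\setminus V'$ is infinite, so at least one component is infinite, while the remaining finite components are finitely many; hence $C$, being a union of finitely many finite sets, is finite.

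Next I would turn the same boundary observation into the quantitative bound. The whole set $C$ is a union of complete components of $G[V\setminus V']$, so it has no edge to any other component and therefore all of its boundary edges again end in $V'$. Consequently the escaping edges of $C$ — those joining $C$ to an infinite component of $G[V\setminus C]$ — form a subset of the edges incident to $V'$, so their number is at most $\Delta|V'|$. Applying the expansion property with $B=\Delta|V'|$, in the contrapositive form of Remark~\ref{rem:sameL}(ii) (a finite subgraph with more than $L(B)$ vertices has at least $B+1$ escaping edges), yields $|C|\le L(\Delta|V'|)$, which is exactly the claimed bound. The main delicate point is that $C$ need not be connected, so I would invoke the expansion property in the general ``finite subgraph'' form in which it was originally stated rather than its connected restatement; this is legitimate for the grid (hence, via Lemma~\ref{lem:grid-propertues} and Proposition~\ref{prop:pgrids}, for polyomino-grids) because the perimeter of a union of cells is the sum of the perimeters of its pieces and the isoperimetric bound $2\lceil 2\sqrt{p}\rceil$ of \cite{Harary1976ExtremalAnimals} is subadditive in $p$, so disconnecting a cell set can only increase its escaping-edge count.

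Finally I would give the listing algorithm and its complexity. Since each finite component meets $V'$, it suffices to launch a breadth-first search in $G[V\setminus V']$ from each of the at most $\Delta|V'|$ neighbours of $V'$, exploring at most $L(\Delta|V'|)+1$ vertices per search. By the size bound just proved, a search that discovers more than $L(\Delta|V'|)$ vertices must lie in an infinite component and is discarded, whereas a search that terminates within this bound has enumerated a finite component, whose vertices are added to $C$. Because $\Delta$ and $|V'|$ are bounded by the size $n$ of the instance and $L$ is a polynomial, each search touches polynomially many vertices; using that the neighbourhood of any vertex of a finitely represented graph can be generated in polynomial time, the whole procedure runs in time polynomial in $n$. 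I expect the only genuine obstacle to be the justification of the $L(\Delta|V'|)$ bound for the possibly disconnected set $C$; once that is settled, the structural and algorithmic parts are routine.
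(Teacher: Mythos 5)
Your proof is correct and follows essentially the same route as the paper's: bound the escaping edges of $C$ by $\Delta\times|V'|$ using the connectivity of $G$, invoke the expansion property to conclude $|C|\leq L(\Delta\times|V'|)$, and list $C$ by a truncated Breadth First Search launched from $V'$ (or its neighbours) that aborts once more than $L(\Delta\times|V'|)$ vertices are seen. Your additional discussion of the possible disconnectedness of $C$ is not a divergence but a welcome refinement: the paper applies the expansion property to $G[C]$ without comment, even though Remark~\ref{rem:sameL} restates that property only for connected subgraphs, so your justification (via the original ``any finite subgraph'' formulation, or subadditivity of the isoperimetric bound) closes a small gap the paper leaves implicit.
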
 
\begin{proof}
Since $G$ is connected, any escaping edge from $G[C]$ in $G$ is adjacent to $V'$ and consequently, their number is at most $\Delta\times |V'|$. This implies, using the expansion property, that $|C|\leq L\left(\Delta\times |V'| \right)$. Since all connected components of $G[C]$ are adjacent to $V'$ and the maximum degree is $\Delta$ (a constant), $C$ can be listed  using Breadth First Search from each vertex $x\in V'$. If the search reveals a connected component of at least $L\left(\Delta\times |V'| \right)+1$ vertices, then it is an infinite connected component and the search from $x$ is stopped. In all, the complexity is $O\left(|V'|\times\Delta \times L\left(\Delta\times |V'| \right)  \right)$.
\end{proof}

So, given Lemma~\ref{lem:finitecc}, we can assume that $G[V\setminus V']$ has only infinite connected components. This requires increasing the size of the new instance to $\max(n, |V'|+|C|)$ but this does not affect whether \mar{algorithms} are polynomial or not. 

\begin{thm}\label{th:poly}
Consider a \mar{connected} infinite graph $G$ of finite maximum degree that satisfies the expansion property and the polynomial growth property. 
Then, \IMFS is polynomial on the class ${\cal G}_G$. 
\end{thm}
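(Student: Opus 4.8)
The plan is to reduce \IMFS on the class ${\cal G}_G$ to the classical \MC problem, which is polynomially solvable, and to argue that the reduction itself runs in polynomial time. The essential observation is that, thanks to the expansion property, a cut system of bounded size $B$ can only isolate a \emph{bounded} set of vertices: any finite connected component created by removing at most $B$ edges has at most $L(B)$ vertices by Remark~\ref{rem:sameL}(ii). Hence containing the fire is equivalent to finding, for each initially burning vertex, a way to enclose it inside a finite region whose boundary (the escaping edges) has total cut cost at most $B$. First I would make precise the equivalence between ``the vertices of $\widetilde V$ lie in finite connected components of $G_H$'' and ``$H$ separates $\widetilde V$ from infinity.''

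Concretely, I would introduce an auxiliary finite graph on which to run the min-cut algorithm. Using Lemma~\ref{lem:finitecc}, I may assume $G[V\setminus V']$ has only infinite connected components, so the whole search region is controlled. The key step is to bound the relevant portion of the infinite graph: if a feasible cut system $H$ with $|H|\le B$ exists, then each vertex of $\widetilde V$ ends up in a finite component of size at most $L(B)$, so the entire ``burnt-and-contained'' region is contained in a ball of radius at most $L(B)$ around $\widetilde V$. By the polynomial growth property (Remark~\ref{rem:sameL}(i)), this ball has at most $|\widetilde V|\cdot L(L(B))$ vertices, a quantity polynomial in $n$. I would therefore extract the finite subgraph $G_0$ induced by this ball together with one extra ``apex'' vertex $t$ representing ``infinity,'' connecting $t$ to every vertex of $G_0$ incident to an edge leaving the ball. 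Then a cut system containing $\widetilde V$ in $G$ within budget $B$ corresponds exactly to a set of at most $B$ edges of $G_0$ separating $\widetilde V$ from $t$; equivalently, to an $s$--$t$ edge cut after adding a source $s$ joined to $\widetilde V$.

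The main steps, in order, would be: (i) prove the separation equivalence; (ii) invoke the expansion property to bound the contained region by $L(B)$ and hence localize the problem to a ball; (iii) use the polynomial growth property to bound the number of vertices and edges of that ball polynomially in $n$, and show it is computable in polynomial time by bounded Breadth First Search (as in Lemma~\ref{lem:finitecc}); (iv) construct the finite $s$--$t$ network by collapsing the boundary to a sink $t$; and (v) observe that a minimum $s$--$t$ cut of value at most $B$ exists if and only if the original instance is a yes-instance, then apply a polynomial min-cut/max-flow algorithm.

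The main obstacle I expect is step (ii)–(iii): correctly arguing that it suffices to search inside a ball of polynomial radius, i.e.\ that \emph{no} feasible cut system can require enclosing vertices far from $\widetilde V$. The subtlety is that a cut might in principle route around a large region; the expansion property rules this out because any finite component with more than $L(B)$ vertices already forces more than $B$ escaping edges, so a budget-$B$ cut can never enclose a large region. Making this quantitative bound sharp --- and ensuring the collapse of all boundary escaping edges into a single sink $t$ preserves the correspondence between escaping edges and cut edges without over- or under-counting --- is where the care is needed. Everything else is bookkeeping: the growth property guarantees the localized graph is polynomially sized, and \MC on a finite graph is a standard polynomial subroutine.
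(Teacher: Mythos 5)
Your overall strategy coincides with the paper's: use the expansion property to bound the size, and hence the diameter, of any component that a budget-$B$ cut system can enclose; use the polynomial growth property to conclude that the whole relevant region lies in a ball around $\widetilde V$ of polynomially many vertices, computable by bounded Breadth First Search; and reduce to a finite $s$--$t$ \MC instance. One minor correction first: the instances in ${\cal G}_G$ live on $G[V\setminus V']$, not on $G$, so the bound on an enclosed component is not $L(B)$ but $L(B+\Delta|V'|)$, obtained from Lemma~\ref{lem:expansion}; this only changes the radius of the ball and keeps everything polynomial.

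The genuine gap is in your step (iv). You attach the sink $t$ to \emph{every} vertex of the ball incident to an edge leaving the ball. But the complement of the ball inside $G[V\setminus V']$ may have \emph{finite} connected components --- pockets trapped between the ball and $V'$, or between the balls around different vertices of $\widetilde V$. An edge from the ball into such a pocket is not an escaping edge: a feasible cut system may legitimately let the fire spread into the pocket, since the resulting component is still finite. Under your construction the pocket-adjacent boundary vertices are wired to $t$, so a yes-instance of \IMFS can map to a network whose minimum $(s,t)$-cut exceeds $B$, breaking one direction of the equivalence. The paper repairs exactly this by adjoining to the ball $V''$ the set $V'''$ of vertices of all finite components of $G[V\setminus(V'\cup V'')]$ (finite and listable in polynomial time by Lemma~\ref{lem:finitecc} applied with $V'\cup V''$ in place of $V'$), and connecting $t$ only to vertices incident to edges that escape to infinite components of $G[V\setminus V']$. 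You correctly flagged the ``over- or under-counting'' of boundary edges as the delicate point, but without this fix the reduction is not correct as stated.
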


\begin{proof}

\mar{Using Remark~\ref{rem:sameL}, we assume that the same polynomial function $L$ is used in the polynomial growth property and the expansion property.}
We denote $\Delta$ the maximum degree of $G$.
We reduce \IMFS on ${\cal G}_G$ to the problem of finding a minimum capacity \mar{$(s,t)$-cut}, denoted  \MC,  in a transportation network $N$ of polynomial size w.r.t. $n$, \mar{the size of $G$}. Since \MC is polynomially solvable \cite{FlowsNetworks}, it will complete the proof.

Consider $I=(G[V\setminus V'],\widetilde{V},B)$, a \IMFS instance  of size $n$, where $|V'|\leq n, |\widetilde{V}|\leq n$, and $B\leq n$.  

As seen before, we assume that $G[V\setminus V']$ has only infinite connected components.  
We then consider the set  $V''=\{x\in V, d(x, \widetilde{V})\leq L(B+\Delta|V'|) \}$, where $d$ denotes the distance in $G$. 
We then consider the infinite graph $G[V\setminus (V'\cup V'')]$ and denote $V'''$ the set of vertices of all finite connected components of 
$G[V\setminus (V'\cup V'')]$. 

We define the transportation network $N$ by adding to $G[(V''\cup V''')\setminus V']$ a source $s$ and all edges $\{s,x\}, x\in \widetilde{V}$. Similarly, we add a vertex $t$ and all edges from any vertex incident to an escaping edge from $G[(V''\cup V''')\setminus V']$ in $G[V\setminus V']$ to $t$. All edges in $N$ incident to $s$ or $t$ have capacity~$B+1$. All edges of $G[(V''\cup V''')\setminus V']$ have capacity~1. With this capacity system, a \mar{$(s,t)$-cut} of capacity at most~$B$ cannot include any edge incident to $s$ or $t$. 

By definition, $V''=\cup_{x\in \widetilde{V}}\{z, d(x,z)\leq L(B+\Delta|V'|)\}$ and consequently, using the polynomial growth property of $L$,
$|V''|\leq |\widetilde{V}|\times L\left(L(B+\Delta|V'|)\right)$, which is polynomially bounded w.r.t. $n$. In addition $V''$ can be listed in polynomial time using Breadth First Search from each vertex in $\widetilde{V}$. Lemma~\ref{lem:finitecc} \mar{(replacing $V'$ with $V'\cup V'')$} guarantees that \mar{$|V'''|\leq L\left(\Delta \times (|V'|+|V''|)  \right)$}, which is polynomial, and $V'''$ can be listed in polynomial time. 
We  deduce that 
 $N$ is of polynomial order at most 
 $$\gab{|\widetilde{V}|\times L\left(L(B+\Delta|V'|)\right)+ L\left(\Delta \times \left(|V'|+ |\widetilde{V}|\times L(L(B+\Delta|V'|))\right)  \right)+2,}$$
which is polynomial w.r.t. $n$ as a composition of polynomial functions. In addition, $N$ can be computed in polynomial time since $G$ is represented in polynomial time and $V''\cup V'''$ and $V'$ can be listed in polynomial time.
 

We then claim that:
\begin{quote}
   There is, in $N$, a \mar{$(s,t)$-cut} of capacity at most~$B$ if and only if $I$ is positive,
\end{quote}
which will conclude the proof.

Assume first there is a \mar{$(s,t)$-cut} of capacity at most~$B$ and denote $(X_s,X_t)$ the two parts:$\{s\}\cup\widetilde{V}\subset X_s$; similarly, $t$ and  all vertices incident to $t$ in $N$ are in $X_t$. The number of edges between $X_s$ and $X_t$ in $G$ is at most $B$. It corresponds to a cut system $H$. Any path in $G[(V''\cup V''')\setminus V']$ from $\widetilde{V}$ to $X_t$ includes at least one edge from $H$. Consider  a vertex $x\in \widetilde{V}$ and the related connected component $C_x$ in $G[V\setminus V']\setminus H$. 
Consider, in $G[V\setminus V']$, 
a ray starting from $x$. Since $V''\cup V'''$ is finite, this ray gets out $V''\cup V'''$ and  
let $z^+$ be the first vertex from $x$ along this ray   such that $z^+\notin (V''\cup V''')$. Let $z^-$ be the vertex just before $z^+$. The corresponding  path from $x$ to $z^-$ is in $(V''\cup V''')\setminus V'$ and, by definition of $V'''$, the edge $\{z^-,z^+\}$ is escaping from $G[(V''\cup V''')\setminus V']$ in $G[V\setminus V']$. So, $z^-\in X_t$ and consequently the path from  $x$ to $z^-$ includes at least one edge from $H$. This means that any ray from $x$ in $G[V\setminus V']$ crosses an edge from $H$. This holds for any $x\in \widetilde{V}$;  $H$ is a cut system that allows to contain the fire and $I$ is positive.

Assume conversely that $I$ is positive and let $H$ be a cut system with at most $B$ edges that allows to contain the fire. Consider as previously  a vertex $x\in \widetilde{V}$ and the related connected component $C_x$ in $G[V\setminus V']\setminus H$. $C_x$ has at most $B$ escaping edges and consequently, using Lemma~\ref{lem:expansion}, we have $|C_x|\leq L(B+\Delta|V'|)$. In particular, the diameter of $C_x$ is at most $L(B+\Delta|V'|)-1$. Consequently, edges in $H$ are edges of $G[V'']$ and moreover, all paths from $\widetilde{V}$ to $t$ in $N$ cross at least one edge of $H$. It means that $H$ is a \mar{$(s,t)$-cut} in $N$, which concludes the proof.
\end{proof}

\mar{Using Proposition~\ref{prop:pgrids}, we deduce:
\begin{coro}
\IMFS can be solved in polynomial time in polyomino-grids.
\end{coro}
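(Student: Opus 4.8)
The plan is to verify that an arbitrary polyomino-grid $G$ satisfies \emph{all} the hypotheses of Theorem~\ref{th:poly} and then invoke that theorem directly, since the corollary is simply its specialization. Theorem~\ref{th:poly} requires three things of $G$: that it be connected, that it have finite maximum degree, and that it satisfy both the polynomial growth property and the expansion property. The last two are handed to us immediately by Proposition~\ref{prop:pgrids}, so the only real work is to check the first two structural conditions, which Proposition~\ref{prop:pgrids} does not address.

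First I would note that connectedness is essentially built into the definition of a polyomino-grid: it is the adjacency graph of a tiling of the whole plane, and since the plane is connected and the tiles cover it without gaps, any two tiles can be joined by a finite chain of pairwise-adjacent tiles. Translating this chain into the adjacency graph yields a finite path between the corresponding two vertices, so $G$ is connected in the sense used throughout the paper. The key step is then the bound on the maximum degree. Here I would argue that since every polyomino in the tiling consists of at most $S$ unit-squares, its boundary is made of at most $4S$ unit edges; each such boundary edge can be shared with at most one neighbouring polyomino, so each vertex of $G$ is adjacent to at most $4S$ other vertices. Hence $\Delta \leq 4S$ is a finite constant, as required.

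With connectedness, finite maximum degree, and (via Proposition~\ref{prop:pgrids}) the polynomial growth and expansion properties all established, the polyomino-grid $G$ falls squarely within the scope of Theorem~\ref{th:poly}. Applying that theorem to $G$ shows that \IMFS is solvable in polynomial time on the class ${\cal G}_G$, which is exactly the claim of the corollary. I expect the main (and only mild) obstacle to be the degree bound, since it is the one hypothesis not delivered verbatim by Proposition~\ref{prop:pgrids}; everything else is either definitional or an immediate appeal to the earlier results. No delicate estimate is needed, as the reduction to \MC and all the polynomial-size bookkeeping have already been carried out inside the proof of Theorem~\ref{th:poly}.
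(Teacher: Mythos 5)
Your proposal is correct and follows exactly the paper's route: the corollary is obtained by combining Proposition~\ref{prop:pgrids} with Theorem~\ref{th:poly}. The paper leaves the connectedness and finite-maximum-degree hypotheses implicit, whereas you verify them explicitly (with a valid $\Delta\leq 4S$ bound); this is a harmless elaboration of the same argument, not a different approach.
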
}

\ales{
From an instance  
$I=(G[V\setminus V'],\widetilde{V},B)$
of \IMFS, we build the network $N$ and use a minimum cut algorithm to solve \IMFS, using Theorem~\ref{th:poly}. \ales{The minimum cut algorithm runs in $ O(nm^2)$ \cite{FlowsNetworks} in a graph with n vertices and m edges}. Then the complexity is of order $O\left((|V''|+ |V'''|)^3\Delta^2\right)\subset O\left((|V''|+ |V'''|)^5\right)$, where $ |V''|=|\widetilde{V}|\times L\left(L(B+\Delta|V'|)\right)$, 
\\$|V'''|= L\left(\Delta \times \left(|V'|+ |\widetilde{V}|\times L(L(B+\Delta|V'|))\right)  \right)$
and $\Delta$ is the maximum degree of the graph $G[V\setminus V']$. }

\subsection{Ray-free graphs}

In this subsection, we illustrate the role of vertices with infinite degree. A graph will be called {\em ray-free} if it does not include a ray, which is an infinite sequence of vertices $(x_i)_{i\in\N}$ such that all $x_i$s are pairwise distinct and $\forall i\in\N, \{x_i, x_{i+1}\}\in E$.

\begin{lem}\label{finite-lemma}
A ray-free connected graph with all vertices of finite degree is finite.
\end{lem}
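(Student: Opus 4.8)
The plan is to prove the contrapositive: I would show that an infinite, connected graph in which every vertex has finite degree must contain a ray. This is exactly the graph form of König's Infinity Lemma, and the core of the argument is a repeated pigeonhole step exploiting local finiteness. Since the hypothesis forbids rays, such a graph cannot be infinite, and the lemma follows.

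Concretely, I would build the ray $x_0, x_1, x_2, \dots$ by induction while maintaining the invariant that, after choosing the prefix $x_0, \dots, x_n$, the set $S_n$ of vertices reachable from $x_n$ by paths avoiding $\{x_0, \dots, x_{n-1}\}$ is infinite. For the base case, pick any $x_0$; since $G$ is connected and infinite, every vertex is reachable from $x_0$, so $S_0 = V$ is infinite.

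For the inductive step, suppose $S_n$ is infinite. Every vertex of $S_n \setminus \{x_n\}$ is reached along a path whose first edge leaves $x_n$ toward one of its neighbours, so $S_n \setminus \{x_n\}$ is covered by the finitely many sets $T_y$, where $T_y$ denotes the vertices reachable from a neighbour $y \notin \{x_0,\dots,x_{n-1}\}$ of $x_n$ by paths avoiding $\{x_0, \dots, x_n\}$. Because $x_n$ has finite degree the index set is finite, and a finite union that is infinite forces at least one $T_y$ to be infinite; I then set $x_{n+1} := y$ for such a $y$. By construction $x_{n+1}$ differs from every previous $x_j$ (it avoids $x_0,\dots,x_{n-1}$ by choice and differs from $x_n$ since it is a neighbour), it is adjacent to $x_n$, and its reachable set $T_y = S_{n+1}$ is infinite, re-establishing the invariant.

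The resulting sequence $(x_i)_{i\in\N}$ has pairwise distinct vertices with $\{x_i,x_{i+1}\}\in E$ for all $i$, i.e.\ a ray, contradicting ray-freeness; hence $G$ is finite. The step I expect to be most delicate — and would verify carefully — is the pigeonhole claim: one must check that local finiteness makes the set of candidate neighbours of $x_n$ finite, and that the infinite reachable set $S_n$ genuinely decomposes over these neighbours, so that infinitude is inherited by at least one branch. The remainder is bookkeeping to guarantee that the chosen vertices stay pairwise distinct.
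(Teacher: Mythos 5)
Your proof is correct and follows essentially the same route as the paper's: both prove the contrapositive via the K\H{o}nig infinity-lemma construction, maintaining an infinite set of vertices still reachable while avoiding the previously chosen ones, and using finite degree to pigeonhole infinitude onto one neighbour at each step. The paper phrases the invariant in terms of an infinite connected component of $G[V_i\setminus\{x_i\}]$ rather than your reachability sets $T_y$, but these are the same objects and the arguments coincide.
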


\begin{proof}

We prove equivalently that an infinite graph with all vertices of finite degree has a ray. Denote $G=(V,E)$ an infinite graph with all vertices of finite degree.  We construct by recurrence a ray $(x_i)_{i\in \N}$ as well as sets $(V_i\subset V)_{i\in \N}$ such that $\forall i\in \N, \{x_i,x_{i+1}\}\in E$, $\{x_0, \ldots, x_i\}\cap V_i=\{x_i\}$, $|V_i|=\infty$ and $G[V_i]$ is connected. Note that it immediately implies that $x_i$s are pairwise distinct and thus define a ray.

We initialize $x_0$ with any vertex and $V_0=V$. Suppose now $\{(x_0,V_0), \ldots,$ $(x_i,V_i)\}$ are constructed. Since $x_i$ is of finite degree in $G$,   $G[V_i\setminus \{x_i\}]$ is infinite and has a finite number of connected components, all including at least one neighbor of $x_i$. Choose $V_{i+1}$ as the set of vertices of any infinite connected component of $G[V_i\setminus \{x_i\}]$ and $x_{i+1}\in V_{i+1}$ a neighbor of $x_i$. By construction $\{x_0, \ldots, x_i, x_{i+1}\}\cap V_{i+1}=\{x_{i+1}\}$ and thus, $(x_{i+1}, V_{i+1})$ satisfies all requirements. This completes the proof.
\end{proof}

Let $(G,\widetilde{V}, B)$ be an instance of \IMFS such that $G=(V,E)$ is ray-free and connected.
We denote $V^\infty\subset V$ the set of vertices of infinite degree in $G$. 

We define $\widehat V$ \ales{the set of vertices } of all \ales{the} connected components of  $G[V\setminus V^\infty]$ including at least one vertex of $\widetilde V$ 
and we denote $\widehat V^\infty$ the set of vertices of infinite degree incident to $\widehat V$. 

\begin{claim}\label{claim1}
$|\widehat V\cup \widehat V^\infty|<\infty$. Moreover, this set can be built algorithmically.
\end{claim}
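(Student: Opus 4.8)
The plan is to reduce the finiteness statement to Lemma~\ref{finite-lemma}, applied componentwise to $G[V\setminus V^\infty]$, and then to handle the constructive claim by a bounded breadth-first exploration started from $\widetilde V$.

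First I would observe that every vertex of $G[V\setminus V^\infty]$ has finite degree: a vertex kept in this subgraph is, by definition, not in $V^\infty$, hence has finite degree in $G$, and removing vertices can only decrease degrees. Moreover $G[V\setminus V^\infty]$ is ray-free, since any ray contained in it would also be a ray of $G$, contradicting the hypothesis. Applying Lemma~\ref{finite-lemma} to each connected component of $G[V\setminus V^\infty]$ --- each being connected, ray-free, and of finite maximum degree --- shows that every such component is finite. I would stress that this conclusion does not require $V^\infty$ itself to be finite.

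Next I would bound $\widehat V$ and $\widehat V^\infty$. Since $\widetilde V$ is finite by definition of an instance, the vertices of $\widetilde V\setminus V^\infty$ meet at most $|\widetilde V|$ connected components of $G[V\setminus V^\infty]$; each such component is finite by the previous step, so $\widehat V$, being their union, is finite. For $\widehat V^\infty$ I would use that every vertex of $\widehat V$ lies outside $V^\infty$ and therefore has finite degree in $G$: consequently only finitely many edges are incident to the finite set $\widehat V$, so only finitely many vertices --- those of $V^\infty$ included --- are adjacent to $\widehat V$. Thus $\widehat V^\infty$ is finite and so is $\widehat V\cup\widehat V^\infty$.

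Finally, for the constructive part I would run a breadth-first search from each vertex of $\widetilde V$, traversing only edges leading to finite-degree vertices and never expanding a vertex of $V^\infty$; each infinite-degree neighbour encountered is instead recorded into $\widehat V^\infty$. Because each explored component is finite, the search discovers no new vertices after finitely many steps and terminates, returning $\widehat V$ together with $\widehat V^\infty$. The main obstacle I anticipate is this algorithmic step rather than the finiteness argument: the search must be able to decide, from the finite representation of $G$, whether a neighbour has finite or infinite degree (so as never to attempt enumerating the infinitely many neighbours of a vertex of $V^\infty$), and termination must be justified from componentwise finiteness even though no a priori size bound on the components is available before the search begins.
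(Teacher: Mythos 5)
Your proof is correct and follows essentially the same route as the paper's: apply Lemma~\ref{finite-lemma} to the (ray-free, finite-degree) components of $G[V\setminus V^\infty]$ meeting $\widetilde V$ to get $|\widehat V|<\infty$, deduce $|\widehat V^\infty|<\infty$ from the finite degrees of vertices in $\widehat V$, and build the set by breadth-first search. You are somewhat more explicit than the paper about why the lemma applies componentwise and about the algorithmic subtleties of the search, but the argument is the same.
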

\begin{proof}
By Lemma~\ref{finite-lemma} and since $|\widetilde V|<\infty$, we have $|\widehat V|<\infty$. Since vertices in $\widehat V$ have all a finite degree, they can be incident to a finite number of vertices of infinite degree. So, $|\widehat V^\infty|<\infty$. Now, we can built the graph $G[\widehat V\cup \widehat V^\infty]$ using a Breadth First Search. The algorithm will run in finite time $O\left((|\widehat V|+ |\widehat V^\infty|)\widehat \Delta\right)\subset O\left((|\widehat V|+ |\widehat V^\infty|)^2\right)$, where $\widehat \Delta$ is the maximum degree of the graph $G[\widehat V\cup \widehat V^\infty]$. We even can remove edges between vertices in $V^\infty$ since we won't use them and the resulting graph is still connected.
\end{proof}

We then define a transport network $N'=(G',\omega)$ obtained from $G[\widehat V\cup \widehat V^\infty]$ by adding a source $s$ linked to all vertices in $\widetilde V\subset \widehat V$, a sink $t$ linked to all vertices in $\widehat V^\infty$. The capacity system $\omega$ is defined as follows: edges of $G[\widehat V\cup \widehat V^\infty]$ have all the capacity~1 and all other edges (incident to $\{s,t\}$) have an infinite capacity. Lower capacities are all~0. 

\begin{prop}\label{prop:finite_cut}
\IMFS in the graph $(G,\widetilde{V}, B)$ is equivalent to finding a $(s,t)$-cut in  $N'=(G',\omega)$ of minimum capacity.
\end{prop}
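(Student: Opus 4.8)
The plan is to show that, for a ray-free connected graph, containing the fire is exactly a \emph{finite} edge-separation problem between the ignition set $\widetilde V$ and the set $V^\infty$ of infinite-degree vertices, and that this separation can be localized to the finite graph $G[\widehat V\cup\widehat V^\infty]$, where it becomes the minimum $(s,t)$-cut in $N'$.

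First I would establish a containment criterion. In $G_H$ with $|H|\le B$ finite, every vertex of $V^\infty$ still has infinite degree, since removing finitely many edges cannot turn an infinite degree into a finite one; hence any connected component of $G_H$ meeting $V^\infty$ is infinite. Conversely, a component avoiding $V^\infty$ has all vertices of finite degree and is therefore finite by Lemma~\ref{finite-lemma}. So $H$ contains the fire if and only if, in $G_H$, no vertex of $\widetilde V$ is connected to any vertex of $V^\infty$. (If some $x\in\widetilde V$ had infinite degree the instance would be trivially negative; consistently with the construction I assume $\widetilde V\subseteq V\setminus V^\infty$, so that $\widetilde V\subseteq\widehat V$.)

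Next I would localize this separation, which is the step I expect to be the main obstacle, since this is where one argues that discarding the rest of the infinite graph loses nothing. The key structural fact is that $\widehat V$ is a union of \emph{entire} connected components of $G[V\setminus V^\infty]$, so no edge leaves $\widehat V$ towards a finite-degree vertex outside $\widehat V$: every edge leaving $\widehat V$ ends in $V^\infty$, and by definition in $\widehat V^\infty$. Consequently, for any $x\in\widetilde V$, a path in $G$ from $x$ to a vertex of $V^\infty$ has a first infinite-degree vertex $w'$; the portion up to $w'$ uses only finite-degree vertices and hence stays in $\widehat V$, while $w'$ is incident to $\widehat V$ and therefore lies in $\widehat V^\infty$. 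Thus $x$ reaches $V^\infty$ in $G_H$ if and only if $x$ reaches $\widehat V^\infty$ inside $G[\widehat V\cup\widehat V^\infty]$ after removing $H$.

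Finally I would match the two optimization problems. Because the edges incident to $s$ and to $t$ carry infinite capacity while the internal edges carry capacity~$1$, every finite-capacity $(s,t)$-cut of $N'$ consists solely of internal edges, and its capacity equals its cardinality; such a cut is exactly a set of edges of $G[\widehat V\cup\widehat V^\infty]$ separating $\widetilde V$ from $\widehat V^\infty$. For the forward direction, given such a cut of size at most $B$, I take it as $H$; by the localization step no $x\in\widetilde V$ reaches $V^\infty$ in $G_H$, so the fire is contained. For the converse, given a containing cut system $H$ with $|H|\le B$, I set $H'=H\cap E(G[\widehat V\cup\widehat V^\infty])$; any path surviving from $\widetilde V$ to $\widehat V^\infty$ in $G[\widehat V\cup\widehat V^\infty]\setminus H'$ would also be a path in $G_H$ from an ignition vertex to an infinite-degree vertex, contradicting containment, so $H'$ is an $(s,t)$-cut of capacity $|H'|\le B$. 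Hence the minimum $(s,t)$-cut capacity in $N'$ equals the minimum size of a containing cut system, and since $N'$ is finite by Claim~\ref{claim1}, this value is computable by a standard minimum-cut algorithm, which yields the claimed equivalence.
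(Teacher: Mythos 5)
Your proposal is correct and follows essentially the same route as the paper: reduce containment to separating $\widetilde V$ from $V^\infty$ (using Lemma~\ref{finite-lemma} and the fact that removing finitely many edges preserves infinite degree), localize the separation to $G[\widehat V\cup \widehat V^\infty]$, and identify finite-capacity $(s,t)$-cuts with containing cut systems. Your explicit justification of the localization step --- that $\widehat V$ consists of entire components of $G[V\setminus V^\infty]$, so any path from $\widetilde V$ to $V^\infty$ must first pass through $\widehat V^\infty$ via edges of $G[\widehat V\cup \widehat V^\infty]$ --- is in fact more detailed than the paper's, which leaves this point implicit.
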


\begin{proof}
We prove that a $(s,t)$-cut of finite capacity  in the transport  network is a feasible cut system for \IMFS and conversely. Since all edges in the network that are not edges of the original graph have an infinite capacity, any $(s,t)$-cut of finite capacity is a cut system $H$. Since the cut separates $s$ and $t$, there is no walk in $G_H$ between a vertex in $\widetilde V$ and a vertex in $V^\infty$ and consequently, Lemma~\ref{finite-lemma} ensures that all connected components of $G_H$ intersecting $\widetilde V$ are finite. This means that $H$ is a feasible cut system containing the fire. 

Assume conversely  that $H$ is a cut system in $G$ that contains the fire. In particular, no connected component of $G_H$ that intersects $\widetilde V$ can include a vertex of infinite degree. Indeed, if the fire reaches such a vertex, it cannot be contained with a finite cut system. It means that, $H\cap E(G[\widehat V\cup \widehat V^\infty])$ is a $(s,t)$-cut in $N'$ and its capacity is at most $|H|$. 

In all, there is $(s,t)$-cut of capacity at most $B$ in $n'$ if and only if there is a cut system of size at most $B$ in $G$ that contains the fire. Moreover, minimal solutions of both problems coincide. This concludes the proof.
\end{proof}

We then deduce the main result of this part. 

\begin{prop}\label{prop:algo}
There is an algorithm that solves \IMFS in instances 
$(G,\widetilde{V},B)$ 
where $G$ is ray-free and connected.
\end{prop}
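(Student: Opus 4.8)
The plan is to combine Claim~\ref{claim1} and Proposition~\ref{prop:finite_cut}, which together collapse the infinite problem onto a single finite minimum-cut computation. First I would invoke Claim~\ref{claim1} to build, in finite time, the finite induced graph $G[\widehat V\cup \widehat V^\infty]$ together with the distinguished sets $\widetilde V\subset\widehat V$ and $\widehat V^\infty$. This is the only step that genuinely interacts with the infinite instance, and its termination is exactly what Claim~\ref{claim1} certifies; from here on, every object is finite and explicitly available.

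Next I would assemble the transport network $N'=(G',\omega)$ precisely as defined before Proposition~\ref{prop:finite_cut}: adjoin a source $s$ joined to every vertex of $\widetilde V$ and a sink $t$ joined to every vertex of $\widehat V^\infty$, assign capacity~$1$ to the original edges of $G[\widehat V\cup \widehat V^\infty]$ and infinite capacity to the new edges incident to $\{s,t\}$ (all lower capacities~$0$). Since $\widehat V\cup\widehat V^\infty$ is finite, $N'$ is a finite network and can be written down explicitly. Note that $\widetilde V$ and $\widehat V^\infty$ are disjoint (the former consists of finite-degree vertices, the latter of infinite-degree ones), so $s$ and $t$ are not joined by an infinite-capacity edge and the minimum $(s,t)$-cut is always finite, being bounded by the number of capacity-$1$ edges.

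Then I would run any standard maximum-flow/minimum-cut algorithm on the finite network $N'$ to obtain a minimum-capacity $(s,t)$-cut. By Proposition~\ref{prop:finite_cut}, the value of this cut equals the size of a smallest feasible cut system that contains the fire; consequently the instance $(G,\widetilde V,B)$ is a yes-instance if and only if this minimum capacity is at most~$B$. Returning this comparison decides the problem, and the cut produced by the algorithm yields an explicit feasible cut system $H=H\cap E\!\left(G[\widehat V\cup \widehat V^\infty]\right)$ whenever one exists.

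The main---and essentially the only---obstacle is guaranteeing termination in finite time despite the presence of vertices of infinite degree, which a priori could force an unbounded exploration of $G$. This is resolved entirely by Claim~\ref{claim1}: Lemma~\ref{finite-lemma} forces $\widehat V$ to be finite, and since each of its (finite-degree) vertices meets only finitely many infinite-degree vertices, the Breadth First Search that constructs $G[\widehat V\cup \widehat V^\infty]$ halts. I would emphasize that, in contrast to the grid case of Theorem~\ref{th:poly}, no polynomiality is asserted here: vertices of infinite degree mean the finite reduction need not be computable in time polynomial in the instance size, so the statement claims only the \emph{existence} of a finite-time algorithm.
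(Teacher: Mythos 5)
Your proposal is correct and follows essentially the same route as the paper: construct the finite graph $G[\widehat V\cup \widehat V^\infty]$ via Claim~\ref{claim1}, build the network $N'$, run a standard minimum-cut algorithm, and conclude via Proposition~\ref{prop:finite_cut}. The only cosmetic difference is that the paper additionally records the running time as polynomial in $|\widehat V\cup\widehat V^\infty|$ and $\widehat\Delta$, whereas you explicitly disclaim any polynomiality in the instance size --- both observations are consistent and correct.
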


\begin{proof}
The main feature of an algorithm is to run in finite time. From an instance  
$(G,\widetilde{V}, B)$
of \IMFS, where $G$ is ray-free and connected, we build the network $(G',\omega)$ using Claim~\ref{claim1} and then use a minimum cut algorithm to solve \IMFS, using Proposition~\ref{prop:finite_cut}. \ales{The minimum cut algorithm runs in $ O(nm^2)$ \cite{FlowsNetworks} in a graph with n vertices and m edges,} then the complexity is of order \ales{$O\left((|\widehat V|+ |\widehat V^\infty|)^3\widehat \Delta^2\right)\subset O\left((|\widehat V|+ |\widehat V^\infty|)^5\right) $} where $\widehat \Delta$ is the maximum degree of the graph $G'$. Hence the complexity is polynomial in $|\widehat V\cup \widehat V^\infty|<\infty$. 
\end{proof}

Proposition~\ref{prop:algo} leads to a polynomial-time algorithm for any instance where  $|\widehat V\cup \widehat V^\infty|<\infty$ is polynomial in $n$.


\begin{rmk}
Note that the proofs of Lemma~\ref{finite-lemma}, Claim~\ref{claim1} and Proposition~\ref{prop:algo} work if we replace ray-free with the fact that there is no ray including only finite degree vertices, or equivalently all rays  include a vertex of infinite degree. 
\end{rmk}

\section{Conclusions}
\label{sec:conclusions}
In this paper we introduce the \IMFS problem. The land is modeled as an infinite graph and assuming that the fire ignites in a subset of vertices and spreads to the neighbors. The goal is to find a cut system that allows the fire to be contained while remaining under budget constrains and therefore limiting the risk. Infinite graphs can be seen as a theoretical model of very large lands and then the problem is motivated by  preventing a wildfire from escaping, i.e., becoming out of control.

We show that the problem is coNP-complete in restricted cases. This motivates the search for polynomial cases; we address two of them. We outline two properties of infinite graphs: the polynomial growth property and the expansion property. These are satisfied by various versions of infinite grids as well as a generalization called Polyomino-grids. Polyomino-grids naturally represent a land with areas of similar surfaces and also allow representing  fire spread by embers by adding edges between close areas. 
We show that \IMFS is polynomial and reduces to the problem of finding a \MC in a transportation network for graphs satisfying both the polynomial growth property and the expansion property. We also state that \IMFS can be solved with the minimum cut algorithm when the graph $G$ is ray-free and connected. 

\section*{Acknowledgements} 
This work has been supported in part by 
the European project  ``Geospatial based Environment for Optimisation Systems Addressing Fire 
Emergencies'' (GEO-SAFE), contract no. H2020-691161, and by the Italian Ministry of Economic development (MISE) under the project ``SICURA - Casa intelligente delle tecnologie per la sicurezza", CUP C19C200005200004 - Piano di investimenti per la diffusione della banda ultra larga FSC 2014-2020.

\AtEndEnvironment{thebibliography}{

\bibitem{DALHAUS}
{\sc E. Dahlhaus}, {\sc P. D. Seymour}, {\sc C. H. Papadimitriou} and {\sc M. Yannakakis},
The complexity of multiterminal cuts,
{\it SIAM Journal on Computing} 23 (1994) 864--894.

\bibitem{DIESTEL}
	{\sc R. Diestel}, Graph Theory, 4th Ed., {\it Springer Verlag, Graduate Texts in Mathematics 187}  (2010).

\bibitem{GAREY}
{\sc M.R. Garey}, {\sc D.S. Johnson},
Computers and intractability, a guide to the theory of $\mathcal{NP}$-completeness,
{\it Freeman, New York} (1979).

\bibitem{Max-2SAT}
{\sc M.R. Garey}, {\sc D.S. Johnson} and {\sc L. Stockmeyer},
Some simplified NP-complete graph problems,
{\it Theoretical Computer Science} 1 (1976) 237--267.

\bibitem{Planar-Max-2SAT}
{\sc L. J. Guibas}, {\sc J. E. Hershberger}. {\sc  J. S. B. Mitchell} and  {\sc J. S. Snoeyink}, Approximating Polygons and Subdivisions with Minimum-Link Paths, {\it International Journal of Computational Geometry \& Applications} 3 (1991) 383--415.

\bibitem{KURATOWSKI}
{\sc C. Kuratowski}, Sur le probl\`eme des courbes gauches en topologie,
{\it Fundamenta Mathematicae } 15(1) (1930) 271--283 (In French).
 
 \bibitem{Planar-3SAT}
{\sc D. Lichtenstein}, Planar Formulae and Their Uses,
{\it SIAM Journal on Computing} 11(2) (1982) 329--343. 


}

\bibliographystyle{siam}
\bibliography{references,references_a}

\end{document}